\documentclass{article}

\usepackage{amsmath,amssymb,upgreek,graphicx,theorem,fancyhdr,natbib}

\lhead{A. HANYGA}
\rhead{Attenuation and wavefronts in Strick and Jeffreys media}

\newtheorem{theorem}{Theorem}[section]

\newenvironment{proof}{\textbf{Proof.}}{\hfill$\Box$}

\newcommand{\x}{\mathbf{x}}
\newcommand{\dd}{\mathrm{d}}
\newcommand{\D}{\mathrm{D}}
\newcommand{\ii}{\mathrm{i}}
\newcommand{\e}{\mathrm{e}}
\newcommand{\re}{\mathrm{Re}\,}
\newcommand{\im}{\mathrm{Im}\,}
\newcommand{\OO}{\mathrm{O}}
\newcommand{\upi}{\uppi}

\bibliographystyle{pageophcap}

\author{Andrzej Hanyga\\
ul. Bitwy Warszawskiej 1920r. 14/52\\
Warszawa, PL\\
{\tt ajhbergen@yahoo.com} 
}

\title{Attenuation and shock waves in linear hereditary viscoelastic media. Strick-Mainardi 
and Jeffreys-Lomnitz-Strick creep compliances.} 

\begin{document}

\maketitle

\textbf{Keywords.} viscoelasticity, seismology, creep, attenuation, shock wave, Bernstein function

\begin{abstract}
Dispersion, attenuation and wavefronts in a class of linear viscoelastic media proposed by 
Strick and Mainardi in 1982 and a related class of models due to Lomnitz, Jeffreys and Strick are 
studied by a new method due to the Author. Unlike the 
previously studied explicit models of relaxation modulus or creep compliance, these two 
classes support propagation of discontinuities. Due to an extension made by Strick either 
of these two classes of models comprise both viscoelastic solids and fluids. 
\end{abstract}

\section{Introduction.}

In most explicit analytic models of viscoelastic media the attenuation as a function of frequency tends to infinity
according to a power law. As a consequence in such viscoelastic media initial discontinuities and discontinuous 
source signals are immediately smoothed out. In those media in which additionally disturbances are bounded in space
by a wavefront propagating at a finite speed the wavefield must decay to zero with all its derivatives 
at the wavefront. Consequently the peak of a pulse stays behind the wavefront and is preceded by a flat 
pedestal \citep{Strick1:ConstQ}. The pedestal widens with the propagation and the delay of the signal with respect
to the wavefront increases in time \citep{HanQAM,HanSerWM}. The delay of seismic signals with respect to the 
wavefronts has to be taken into account in seismic inversion \citep{Strick3:ConstQ,FastTrack}. 

Viscoelastic models with a power law behavior in the high frequency limit are common in
materials science (e.g. in polymer and rubber theory) and in the theory of ultrasound in biotissues 
\cite{SzaboWu00,Szabo}. Biot's theory of poroelasticity \citep{BiotMech:BIOT,BiotMechP:BIOT,BiotI,BiotII} leads 
to similar results \citep{HanCarc,LuHanygaPorous3,LuHanygaJCP,LuHanygaGP}.

In this paper we shall present a class of creep compliances proposed by seismologists Lomnitz, Jeffreys, Mainardi and Strick \citep{Lomnitz62,Jeffreys67,StrickMainardi82,Strick84}.
These creep compliances were originally considered in connection with the constant $Q$ hypothesis. They however 
deserve attention because of another peculiarity: in the viscoelastic media defined by these creep compliances the attenuation
function is bounded and therefore such media support propagation of discontinuities at the wavefront.
Green's function for such media can be locally decomposed into a discontinuity wave and a continuous remainder
\begin{equation}
\mathcal{G}(t,\x) = a(\x)\, \theta(t - F(\x)) + \mathcal{G}_1(t,\x)
\end{equation}
If a pulse $f^\prime(t)\, \delta(\x)$ is sent from a point source then the wavefield
\begin{equation} \label{eq:conv} 
u(t,\x) = f^\prime(t)\ast \mathcal{G}(t,\x) = a(\x)\, f(t - F(\x)) + f^\prime(t)\ast\mathcal{G}_1(t,\x)
\end{equation}
where 
\begin{equation}
\varphi_1(t)\ast\varphi_2(t) := \int_0^t \varphi_1(s)\,\varphi_1(t-s)\, \dd s
\end{equation} 
denotes the Volterra convolution with respect to time. 
Equation~\eqref{eq:conv} shows that the pulse travels with the speed of the wavefront. 
This is an assumption 
commonly made in seismic inversion. It is clear that a 
careful analysis of viscoelastic models of wave propagation is overdue. 

Wavefronts in Jeffreys media ($\alpha > 0$) 
were previously studied numerically by \citet{Buchen1:VE}, who summed ray expansions and compared 
the wavefronts for various pulse shapes and material parameters. Our objective is to 
put wavefront discontinuities and the attenuation functions in the same perspective. 
Low-frequency attenuation has often been studied by experimental methods  
in materials science, bio-tissues and in seismology. Wavefront singularities provide additional information
on attenuation in the high-frequency range. Wavefronts singularities are also relevant for a correct 
definition of travel time \citep{FastTrack}. 
However, since attenuation is often considered independently
of wavefront singularities and pulse propagation, viscoelastic models for these two kinds 
of phenomena are often inconsistent.

It was shown in \citet{SerHanJMP} and \citet{HanWM2013} that the propagation speed 
$c(\omega)$ and the attenuation function $\mathcal{A}(\omega)$ in a viscoelastic medium 
with a creep compliance which is a Bernstein function can be expressed in terms of a Radon measure (essentially a locally finite measure) called 
the attenuation spectrum. Only the low-frequency behavior of the propagation speed and the attenuation function 
is available to experiments \citep{HanWM2013,NasholmHolm2011}. 
In \citet{HanJCA} it was shown that the high-frequency asymptotics of the attenuation function
determines the regularity of viscoelastic Green's functions. In \citet{SerHanJMP} and \citet{HanWM2013} a 
causal function $g(t)$ was defined such that $\mathcal{A}(\omega) = \re [p\, \tilde{g}(p)]$. 
In \citet{HanUno} asymptotic estimates and upper bounds  of the Green's functions near 
the wavefront have been expressed in terms of the function $g(t)$. 

We shall apply this analytic toolbox to the analysis of attenuation, dispersion and 
discontinuity waves in two classes of viscoelastic models: Strick-Mainardi models and 
the Jeffreys-Lomnitz-Strick models. The attenuation function and the function $g(t)$ can be 
explicitly calculated for Strick-Mainardi models. Both classes comprise viscoelastic solids 
($\alpha < 0$) and viscoelastic fluids ($\alpha \geq 0$). 

\section{Mathematical preliminaries.}

We shall consider the Initial-Value Problem (IVP) 
\begin{gather} \label{eq:mDlinpro}
\rho \, u_{,tt} = \nabla\cdot[G(t)\ast \nabla u_{,t}] + \delta(x)\,\delta(t), 
\qquad t \geq 0,\quad x \in \mathbb{R}\\
u(0,x) = 0; \quad u_{,t}(0,x) = 0 \label{eq:IVmDlinpro}
\end{gather} 
for the particle velocity $u$ in a hereditary viscoelastic medium. It is assumed that 
the relaxation modulus $G(t)$ (defined for $t > 0$) is completely monotonic (CM), i.e. it has derivatives $\D^n\, G$ of
arbitrary order and these derivatives satisfy the inequalities 
$$(-1)^n \, \D^n\, G(t) \geq 0 \qquad\text{on $\mathbb{R}$ for $n =0,1,2\ldots$}$$
It is also assumed that $G$ is locally integrable, or, equivalently
$$\int_0^1 G(s)\, \dd s < \infty$$
We shall use the abbreviation LICM for locally integrable completely monotonic functions. 
It follows \citep{HanDuality} that the creep compliance $J(t)$ ($t \geq 0$), related to the relaxation 
modulus by the equation
\begin{equation} \label{eq:duality}
\int_0^t G(s)\, J(t - s)\, \dd s = t \qquad\text{for $t \geq 0$}
\end{equation}
is a Bernstein function (BF), i.e. it is non-negative, differentiable and its derivative $J^\prime$ is LICM 
\citep{BernsteinFunctions}. Conversely, for a given BF $J$ equation~\eqref{eq:duality} has a unique solution $G$ 
and the solution $G$ is LICM \cite{HanDuality}.
We also recall that $0 \leq J_0 := J(0) < \infty$ and $J_0 = 0$ if and only if 
$\lim_{t\rightarrow 0+} G(t) = \infty$.

The solution of the IVP (\ref{eq:mDlinpro}--\ref{eq:IVmDlinpro}) is given by the formula
\begin{equation} \label{eq:Green1D}
u(t,x) = \frac{1}{4 \upi \ii} \int_{-\ii \infty + \varepsilon}^{\ii \infty + \varepsilon}
\frac{\kappa(p)}{2 \rho\, p^2}\, \e^{p\, t -\kappa(p)\, \vert x \vert} \, \dd p
\end{equation}
where \begin{equation} \label{eq:kappadef}
\kappa(p) := \rho^{1/2}\, p \, \left[p \, \tilde{J}(p)\right]^{1/2}
\end{equation}
and $\varepsilon > 0$.

In \citet{SerHanJMP} and \citet{HanWM2013} it was showed that $\kappa(p)$ is a complete 
Bernstein function (CBF) \citep{BernsteinFunctions,Jacob01I}, i.e.
$$\kappa(p) = p^2\, \tilde{F}(p),$$ where $F$ is a Bernstein function.
Furthermore $\kappa(0) = 0$.
Consequently $\kappa$ has an integral representation of the following form
\begin{equation} \label{eq:kappa}
\kappa(p) = p/c_0 + p \int_{]0,\infty[} \frac{\nu(\dd r)}{p + r}
\end{equation}
where $\nu$ is a positive Radon measure satisfying the inequality
\begin{equation} \label{eq:doss}
\int_{]0,\infty[} \frac{\nu(\dd r)}{1 + r} < \infty 
\end{equation}
\citep{BernsteinFunctions} and $c_0$ is a constant satisfying the inequalities 
$0 < c_0 \leq \infty$, defined by the formula 
\begin{equation}
1/c_0 := \lim_{p\rightarrow\infty} \kappa(p)/p 
\end{equation}
Note that
\begin{equation} \label{eq:c0}
1/c_0 = \rho^{1/2}\,\lim_{p\rightarrow \infty} \left[p\, \tilde{J}(p)\right]^{1/2} = 
[\rho\, J_0]^{1/2}
\end{equation}
The dimension of $\kappa(p)$ and $\nu(\dd r)$ is 1/L. We shall assume that $J_0 > 0$ and $c_0 < \infty$. 
This excludes some viscoelastic models used in seismology in connection with the constant $Q$
hypothesis (e.g. \citet{Kjartansson:ConstQ}) and in materials science in connection with the power law 
attenuation (e.g. \citet{KellyMcGoughMeerschaert08}). 
 
If $J_0 > 0$ then the constant $c_0$ defines the wavefronts $\vert x\vert = c_0\, t$ such that 
$u(t,x) = 0$ for $t > \vert x \vert/c_0$, otherwise $c_0 = \infty$ and the solution 
$u(t,x)$ does not vanish anywhere in the space-time. 

The attenuation function $\re \kappa(-\ii \omega)$ and the dispersion function 
$-\im \kappa(-\ii \omega)$ of the medium  
can be expressed in terms of the Radon measure $\nu$, hence the Radon measure  $\nu$ 
is called the dispersion-attenuation measure in \cite{HanWM2013}
\begin{gather}
\mathcal{A}(\omega) = \omega^2 \int_{]0,\infty[} \frac{\nu(\dd r)}{r^2 + \omega^2} \label{eq:attn}\\
\mathcal{D}(\omega) = \omega \int_{]0,\infty[} \frac{r\, \nu(\dd r)}{r^2 + \omega^2}
\end{gather}

The attenuation function $\mathcal{A}(\omega)$ is non-decreasing and therefore it tends
to a finite limit $\mathcal{A}_\infty :=
\lim_{\omega\rightarrow\infty} \mathcal{A}(\omega)$ if it is bounded.
If $\nu$ has finite mass $N := \nu(]0,\infty[)  < \infty$ then 
$\lim_{\omega\rightarrow\infty} \mathcal{A}(\omega) = N$ by the Lebesgue Dominated Convergence Theorem. 
In particular $N < \infty$ if the support of $\nu$ is bounded. Conversely, if 
$\mathcal{A}(\omega)$ is bounded, then, 
by the Fatou lemma (\cite{Rudin76}, Theorem~11.31) and equation~\eqref{eq:attn} 
$\int_0^\infty \nu(\dd r) \leq \lim_{\omega\rightarrow\infty} \,
\mathcal{A}(\omega)$ and the attenuation-dispersion spectral measure $\nu$ has finite mass. 
By the 
preceding argument $\mathcal{A}_\infty = N$. We have thus proved that $\mathcal{A}_\infty = N$ and
both numbers can be finite or infinite.

The following theorem can be used to check whether the attenuation measure $\nu$ has finite 
total mass.
\begin{theorem} \label{thm:xx}
\begin{equation} \label{eq:oo}
\lim_{p \rightarrow \infty} \left[p\, \left(\frac{\kappa(p)}{p} - \frac{1}{c_0}\right) \right]
= \int_{]0,\infty[} \nu(\dd r)
\end{equation}
where the right-hand side can be infinite.
\end{theorem}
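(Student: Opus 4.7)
The plan is to reduce the claim to a direct calculation using the integral representation~\eqref{eq:kappa} of $\kappa$, and then invoke the Monotone Convergence Theorem (which works for arbitrary nonnegative measures, regardless of whether $\nu$ is finite or not).

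First I would use equation~\eqref{eq:kappa} to write
\begin{equation*}
\frac{\kappa(p)}{p} - \frac{1}{c_0} = \int_{]0,\infty[} \frac{\nu(\dd r)}{p + r},
\end{equation*}
so that
\begin{equation*}
p\left(\frac{\kappa(p)}{p} - \frac{1}{c_0}\right) = \int_{]0,\infty[} \frac{p}{p + r}\, \nu(\dd r) = \int_{]0,\infty[} \frac{1}{1 + r/p}\, \nu(\dd r).
\end{equation*}

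Next I would observe that for each fixed $r \in {]0,\infty[}$ the integrand $p/(p+r) = 1 - r/(p+r)$ is nonnegative and monotonically increasing in $p > 0$, with pointwise limit $1$ as $p \to \infty$. Hence for any sequence $p_n \uparrow \infty$ the Monotone Convergence Theorem applies and gives
\begin{equation*}
\lim_{n\to\infty} \int_{]0,\infty[} \frac{p_n}{p_n + r}\, \nu(\dd r) = \int_{]0,\infty[} \nu(\dd r),
\end{equation*}
with both sides allowed to be $+\infty$.

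The main subtlety, and the reason I would use monotone rather than dominated convergence, is that a priori we do not know that $\nu$ is a finite measure; condition~\eqref{eq:doss} only provides integrability of $1/(1+r)$ against $\nu$, which is not enough to dominate the constant function $1$. Monotone convergence sidesteps this completely and simultaneously handles the finite and infinite cases, so no splitting of cases is needed. The monotonicity in $p$ of $p/(p+r)$ is the only real computation; everything else is formula~\eqref{eq:kappa} rewritten.
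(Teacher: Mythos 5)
Your proposal is correct and follows exactly the paper's own argument: rewrite $p\left(\kappa(p)/p - 1/c_0\right)$ via equation~\eqref{eq:kappa} as $\int_{]0,\infty[} [p/(p+r)]\,\nu(\dd r)$ and apply the Monotone Convergence Theorem, noting that $p/(p+r)$ increases to $1$. Your added remark about why dominated convergence would not suffice under condition~\eqref{eq:doss} is a sensible elaboration but does not change the route.
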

\begin{proof}
The left-hand side of equation~\eqref{eq:oo} equals $ \int_{]0,\infty[} [p/(r + p)]\,\nu(\dd r)$. The theorem follows by the Monotone Convergence Theorem \citep[Sec.~11.28]{Rudin76}.\hfill
\end{proof}

In terms of the creep compliance
\begin{equation} \label{eq:yy}
\lim_{p\rightarrow \infty} \left\{ p \left[\left(\rho p \tilde{J}(p)\right)^{1/2} - (\rho\, J_0)^{1/2}
\right] \right\} = \int_{]0,\infty[} \nu(\dd r)
\end{equation}

The Radon measure $\nu$ can be calculated using equation~\eqref{eq:kappa}. 
If $\nu(\dd r) = h(r)\, \dd r$, then
\begin{equation} \label{eq:hrkappa}
h(r) = \frac{1}{\upi} \im \left[ \kappa(p)/p\right]_{p=r \, \exp(-\ii \upi)}
\end{equation}
\citep{SerHanJMP,HanWM2013}, or, using equation~\eqref{eq:kappadef},
\begin{equation} \label{eq:hrJ}
h(r) = \frac{\rho^{1/2}}{\upi} \im \left\{ \left[p \, \tilde{J}(p)\right]^{1/2}\right\}
\end{equation}

Recall that every LICM function $\varphi$ has the integral representation 
\begin{equation} \label{eq:LICM}
\varphi(t) = a + \int_{]0,\infty[} \e^{-r\, t} \,\nu(\dd r)
\end{equation}
where $\nu$ is a positive Radon measure satisfying the inequality \eqref{eq:doss}
\citep{GripenbergLondenStaffans}.
Define the function function $g$ by the formula
\begin{equation} \label{eq:KK2}
g(t) = \int_{]0,\infty[} \e^{-r\, t} \, \nu(\dd r)
\end{equation}
where the Radon measure $\nu$ is defined by equation~\eqref{eq:kappa}. 
We then have an important formula
\begin{equation}\label{eq:important}
\kappa(p) = \frac{p}{c_0} + p\, \tilde{g}(p)
\end{equation}
The function 
$g$ is LICM and $\lim_{t\rightarrow \infty} g(t) = 0$. The dimension of $g(t)$ is 1/L. 
The function $g(t)$ assumes a finite value at 0 if $\nu$ has a finite mass. Note that 
any function $\kappa$ given by equation~\eqref{eq:important}, where $g$ is a LICM function, 
is a CBF on account of equation~\eqref{eq:LICM} and equation~\eqref{eq:kappa}. Furthermore, 
it is proved in \citet{HanUno} that 
\begin{equation}\label{eq:kli}
g(0+) = \rho \, c_0 \, J^\prime(0+)/2.
\end{equation}
or, equivalently, 
\begin{equation} \label{eq:g0vsJ}
g(0+) = J^\prime(0+)/(2 J_0\, c_0)
\end{equation}
Furthermore
\begin{equation} \label{eq:ineqg}
g(t) \leq \rho \, c_0 \, J^\prime(t)/2
\end{equation}

If the attenuation function is bounded then $g(0+) = \int_{]0,\infty[} \nu(\dd r) = \mathcal{A}_\infty < \infty$.
 
Green's function $\mathcal{G}$ can be approximated by an explicit function $H(t,x)$
\begin{equation} \label{eq:GtoH}
\mathcal{G}(t,x) = \frac{1}{2 \rho} H(t - \vert x \vert/c_0, \vert x \vert) \, 
[1 + \OO[t - \vert x \vert/c_0]]
\end{equation}
where $H(\cdot,r)$ is a non-negative non-decreasing function defined by the equation 
\begin{equation}
\e^{-p\, \tilde{g}(p)\,r}/p = \int_0^\infty \e^{-p\, t} \, H(t,r)\, \dd t
\end{equation}
It is then proved in \citet{HanUno} that 
\begin{equation}
H(t,r) \sim_{t\rightarrow 0} \e^{-g(t)\, r}
\end{equation}
In view of equation~\eqref{eq:GtoH} this implies that 
\begin{equation}
\mathcal{G}(t,x) \sim_{t\rightarrow \vert x\vert/c_0 + 0} \frac{1}{2 \rho} \e^{-g(t - \vert x\vert/c_0)\, r}
\end{equation}

If $g(0+) < \infty$ then it is also true that
\begin{equation}
\lim_{t\rightarrow \vert x\vert/c_0 + 0} \mathcal{G}(t,x) = \frac{1}{2 \rho} \e^{-g(0+)\, r}
\end{equation}
while $\lim_{t\rightarrow \vert x\vert/c_0 - 0} \mathcal{G}(t,x) = 0$. Hence in this case
the wavefront carries a jump discontinuity $\exp(-g(0+) \,r)/(2 \rho)$.

\section{The Strick-Mainardi creep compliance.}

Consider the following function
\begin{equation}
F_\alpha(\Omega,p) := \frac{1}{\alpha} 
\left[\left(1 + \frac{\Omega}{p}\right)^\alpha - 1\right], \qquad -1 < \alpha < 1,\quad\alpha \neq 0
\end{equation}
and its limit for $\alpha \rightarrow 0$:
\begin{equation}
F_0(\Omega,p) = \ln\left( 1 + \frac{\Omega}{p}\right)
\end{equation}
We suppose that $F_\alpha$ is a Laplace transform and try to find its original $K_\alpha$:
$$K_\alpha(t,\Omega) = \frac{1}{2 \upi  \, \alpha\, \ii} \int_\mathcal{B} \e^{p\, t}\, 
\left[\left(1 + \frac{\Omega}{p}\right)^\alpha - 1\right]\, \dd p$$
It follows from the asymptotic estimate $(1 + \Omega/p)^\alpha - 1 \sim_\infty \alpha \, 
\Omega/p$ that 
the integrand tends to zero uniformly for $p \rightarrow \infty$ in the left complex half-plane 
$\re p \leq 0$.
The integrand on the right-hand side does not have any singularities outside the
cut along the negative real semi-axis. There is no contribution of the small circle of radius 
$\varepsilon$ centered at the origin. By Jordan's lemma the Bromwich contour can 
be replaced by the Hankel loop encircling the negative semi-axis in the positive direction. Setting $p = r\, \e^{\ii \upi}$
for the part of the contour running above the cut yields the following expression:
\begin{equation} \label{eq:Gxx}
K_\alpha(t,\Omega) = \frac{-1}{\alpha \upi } \int_0^\infty \e^{-r\, t} \, \im \left(1 + \e^{-\ii \upi}\,\frac{\Omega}{r} 
\right)^\alpha\, \dd r
\end{equation}
where the limits from the upper/lower half of the complex $p$-plane are 
identified by the phases $\arg(p) = \pm \, \upi$.
On $[\Omega,\infty[$ the function $\left(1 + \e^{-\ii \upi}\,\Omega/r \right) = 1 - \Omega/r$ 
is non-negative and the integrand of \eqref{eq:Gxx} vanishes.
On $[0,\Omega[$ however $1 + \e^{-\ii \upi}\,\Omega/r = (\Omega/r - 1)\,\e^{-\ii \upi}$ and thus
the integrand of the right-hand side of equation~\eqref{eq:Gxx} does not vanish. Thus
\begin{multline*}
K_\alpha(t,\Omega) = \frac{\sin(\alpha \upi)}{\alpha \upi } \int_0^\Omega \e^{-r\, t} 
\int_0^\Omega \e^{-r\, t}\,r^{-\alpha}\, (\Omega - r)^\alpha\, \dd r = \\
\Omega\; \frac{\sin(\alpha \upi)}{\alpha \upi } \int_0^1 \e^{-\Omega\, t\, y} \, y^{-\alpha}\, (1 - y)^\alpha \, \dd y
\end{multline*}
Hence
\begin{equation} \label{eq:yuy}
\int_0^t K_\alpha(t,\Omega) \, \dd t = \frac{\sin(\alpha \upi)}{\alpha\, \upi } 
\int_0^1 \left( 1 - \e^{-\Omega\, t\, y}\right)\,y^{-\alpha - 1}\, (1 - y)^\alpha \, \dd y
\end{equation}
The integral on the right-hand side of equation~\eqref{eq:yuy} converges if 
$-1 < \alpha < 1$, and 
represents a Bernstein function.
Comparison with the integral representation of the confluent hypergeometric function
(\cite{Abramowitz} Sec.~13.2.1) and the relation $_1\mathrm{F}_1(-\alpha,1;0) = 1$ shows that
\begin{equation} \label{eq:yux}
\frac{\sin(\alpha \upi)}{\alpha\, \upi } 
\int_0^1 \left( 1 - \e^{-\Omega\, t\, y}\right)\,y^{-\alpha - 1}\, (1 - y)^\alpha \, \dd y = 
\left[_1\mathrm{F}_1(-\alpha,1;- \Omega\, t) - 1\right]/\alpha
\end{equation}
provided $-1 < \alpha < 0$. Note also that this expression vanishes at 0.
Consequently if $J_0, M_0\geq 0$ and $-1 < \alpha < 0$  then 
\begin{equation} \label{eq:Strickcreep}
J^{(\alpha,\Omega)}(t) := J_0 + \frac{M_0}{\alpha} \, \left[\, _1\mathrm{F}_1(-\alpha,1;-\Omega\, t) - 1\right], \qquad t > 0
\end{equation}
is a creep compliance. This creep compliance was introduced by E. Strick and F. Mainardi 
\citep{Strick82,StrickMainardi82}.
Note that $\lim_{t\rightarrow 0+} J^{(\alpha,\Omega)}(t) = \lim_{p\rightarrow\infty} = J_0$. If $\alpha < 0$ then the 
infinite time limit of creep compliance  
$J_\infty := \lim_{p\rightarrow 0} \left[ p\, \widetilde{J^{(\alpha,\Omega)}}(p)\right] = J_1$,
where $J_1 := J_0 - M_0/\alpha$, is finite
and $J_\infty \geq J_0$. The Laplace transform of the creep compliance is given by
the formula 
\begin{equation} \label{eq:alpha}
p\, \widetilde{J^{(\alpha,\Omega)}}(p) = J_0 + \frac{M_0}{\alpha} \left[ 
\left( 1 + \frac{\Omega}{p}\right)^\alpha - 1\right]
\end{equation} 
and the retardation spectral density can be calculated from equation~\eqref{eq:yux}: 
$$H^{(\alpha,\Omega)}(r) = \frac{\sin(\alpha \upi)}{\alpha \upi }  
M_0\, r^{-\alpha -1}\, (1 - r/\Omega)^\alpha \, \theta(1 - r/\Omega)$$

The case of $\alpha = 0$ will be treated in a similar way.
$$
K_0(t,\Omega) = \frac{1}{2 \upi \ii} \int_\mathcal{B} \e^{p\, t}\,\ln\left( 1 + \frac{\Omega}{p}\right) \, \dd p = 
-\frac{1}{\upi} \int_0^\infty \e^{-r\, t}\, \im \ln\left(1 + \frac{\Omega}{r} \e^{-\ii \upi}\right) \, \dd r
$$
The logarithm in the integrand is real for $r > \Omega$. On $[0,\Omega[$ however it has the imaginary part 
$-\upi$. Hence 
$$K_0(t,\Omega) = \int_0^\Omega \e^{-r \,t} \, \dd r = \frac{1}{t} \left( 1 - \e^{-\Omega\, t}\right) $$
The indefinite integral of $K_0(t,\Omega)$ 
$$\int_0^t K_0(s,\Omega) \, \dd s \equiv \int_0^\Omega \frac{1}{y} \left( 1 - \e^{-y\, t}\right) \, \dd y $$
is thus a Bernstein function. It is  
recognized as the modified exponential integral 
$\mathrm{Ein}(\Omega \, t)$ \citep{Abramowitz}.
We can now define Becker's creep compliance \citep{Becker25}
\begin{equation} \label{eq:Becker}
J^{(0,\Omega)}(t) = J_0 + M_0 \, \mathrm{Ein}(\Omega\, t), \qquad t \geq 0
\end{equation}
where $J_0$ is a non-negative constant.
Applying the limit $\alpha \rightarrow 0$ in \eqref{eq:alpha} yields
\begin{equation}
p\,\widetilde{J^{(0,\Omega)}}(p) = J_0 + M_0 \, \ln\left( 1 + \frac{\Omega}{p}\right) 
\end{equation}
and the retardation spectral density is 
$H^{(0,\Omega)}(r) = M_0 \, \theta(1 - r/\Omega)/r$.

We have thus proved that the left-hand side of equation~\eqref{eq:yux} is defined 
for $-1 < \alpha < 1$ and is obviously an analytic function of $\alpha$. The confluent hypergeometric function 
is however an analytic function 
of the first argument. Equation~\eqref{eq:yux} therefore holds for $-1 < \alpha < 1$
by analytic continuation, with the value at $\alpha = 0$ given by $\mathrm{Ein}(a\, t)$. 

The creep compliances $J^{(\alpha,\Omega)}(t)$ are shown in Figure~\ref{fig:StrickCompl}.
\begin{figure}
\begin{center}
\includegraphics[width=0.75\textwidth]{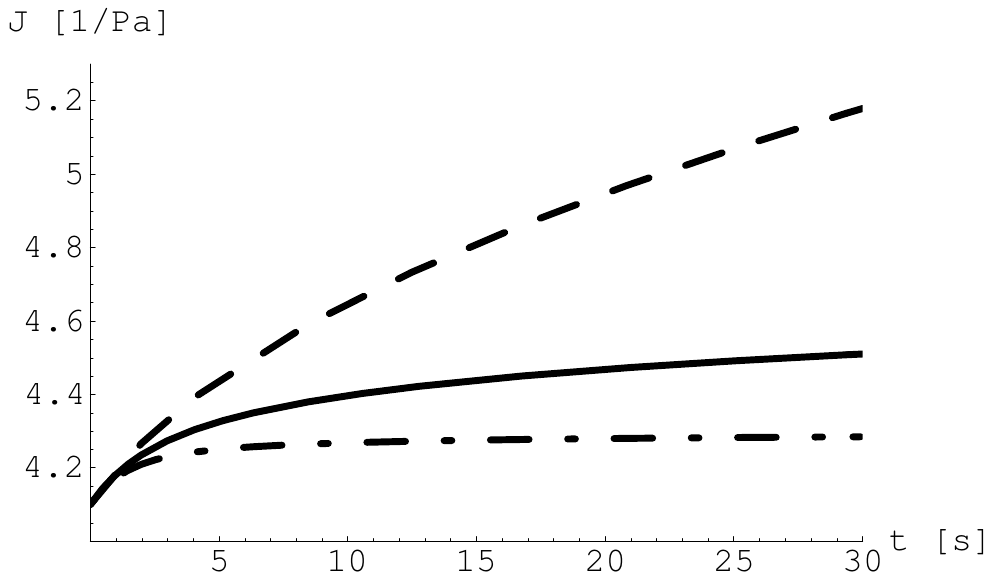}
\end{center}
\caption{Strick-Mainardi creep compliance $J^{(\alpha,1)}$ for $J_0 = 4.1\times 10^{-11}\, \mathrm{Pa}^{-1}$,
$M_0 = 16\times 10^{-11}\,\,\mathrm{Pa}^{-1}/(\upi*50)$ and for $\alpha = -0.5$ (dot-dashed line), $\alpha = 0$ 
(solid line) and $\alpha= 0.5$ (dashed line).} \label{fig:StrickCompl}
\end{figure}

The asymptotic behavior of Strick-Mainardi creep compliance follows from the formulae
\begin{gather}
_1\mathrm{F}_1(a,1;-z) \sim_\infty z^{-a}/\Gamma(1 - a)\\
\mathrm{Ein}(z) \sim_\infty \ln(z) + \gamma + \e^{-z}/z
\end{gather}
(\cite{Abramowitz}, Secs~13.5.1 and 5.1.51)
\begin{equation}
J^{(\alpha,\Omega)}(t) \sim_\infty \begin{cases}
J_0 + M_0\, (\Omega\, t)^\alpha/\alpha  & \alpha > 0\\
J_0 + M_0 \left[ 1 - (\Omega\, t)^\alpha\right]/\vert \alpha\vert  & \alpha < 0\\
J_0 + M_0\, \ln(\Omega\, t) & \alpha = 0
\end{cases}
\end{equation}
Note that the creep compliance for $\alpha < 0$ is bounded and in the remaining cases it is 
unbounded. Hence for $\alpha \geq 0$ the low-frequency limit of creep compliance  
$J_\infty = \infty$ and therefore $G_\infty = 0$. Consequently 
the medium is a viscoelastic solid if $\alpha < 0$ and a viscoelastic fluid if $\alpha \geq 0$.

For $t$ small we can use
the Taylor expansions of the confluent hypergeometric function and the modified exponential
integral: 
\begin{gather}
_1\mathrm{F}_1(a,;z) \sim_0 1 + a \, z /b \\
\mathrm{Ein}(z) \sim_0 z
\end{gather}
(\cite{Abramowitz} Sec.~5.1.53).
Hence the initial rate of creep is approximately linear for $-1 < \alpha <1$.

\section{Attenuation and dispersion in the Strick-Mainardi media.}

We shall now consider the attenuation and dispersion
in materials characterized by the Strick-Mainardi creep compliance $J^{(\alpha,\Omega)}(t)$, where 
$-1 < \alpha < 1$, $\alpha \neq 0$ and $\Omega > 0$:
\begin{equation}
\widetilde{J^{(\alpha,\Omega)}}(p) = J_1 + M_1 \, \left(1 + \frac{\Omega}{p}\right)^\alpha
\end{equation}
where $M_1 := M_0/\alpha$ and $J_1 := J_0 - M_1$. 
The wavenumber function $\kappa(p) = \rho^{1/2}\, p \left[ p \, \widetilde{J^{(\alpha,\Omega)}}(p)\right]^{1/2} 
= p/c_0 + \beta(p)$, where $1/c_0 = (\rho\, J_0)^{1/2}$ if $J_0 > 0$. 

The density of the attenuation-dispersion measure $\nu$ will be calculated from the formula \eqref{eq:hrJ}:
$$h(r) = \frac{\rho^{1/2}}{\upi} \im Z^{1/2}$$
where
$Z := J_1 + M_1\,(1 + \Omega/(r \, \exp(-\ii \upi))^\alpha$.
Note that
$$ \left(1 + \frac{\Omega}{r \, \exp(-\ii \upi)}\right)^\alpha = \begin{cases}
(\Omega/r - 1)^\alpha \, \e^{\ii \upi \alpha}, & r < \Omega\\
(1- \Omega/r)^\alpha, &  r > \Omega
\end{cases}
$$
and $\im Z^{1/2} = \frac{1}{\sqrt{2}} \sqrt{\sqrt{X^2 - Y^2}- X}$,
where $X := \re Z$ and $Y := \im Z$. It follows that $h(r) = 0$ for $r > \Omega$ and
\begin{equation} \label{eq:hStrick}
h(r) = \frac{\rho^{1/2}}{\sqrt{2}\, \upi} \sqrt{\sqrt{X(r)^2 + Y(r)^2} - X(r)}, \qquad r < \Omega
\end{equation}
where $X(r) := J_1 + M_1\, \cos(\alpha\, \upi)\, (\Omega/r - 1)^\alpha$ and
$Y(r) := M_1 \, \sin(\alpha\, \upi)\, (\Omega/r - 1)^\alpha$ for $0 \leq r \leq \Omega$.

The case of $\alpha = 0$ requires some calculi. We note that $h(r)$ is given by equation~\eqref{eq:hStrick}
with $Y(r) = \upi\, M_0$ and  $X(r) = J_0 + M_0\, 
\ln\left(\left\vert\Omega/r - 1\right\vert\right)$, both for $r \leq \Omega$. Hence $h(r)$ vanishes for $r > \Omega$ and
\begin{equation} \label{eq:log}
h(r) \sim_0 \frac{1}{2 \upi\, c_0} \sqrt{\frac{J_0^{\;2} + \upi^2\, M_0^{\;2}}{J_0\, M_0}}\,
\ln^{-1/2}\left(\frac{\Omega}{r} - 1\right), \qquad r < \Omega
\end{equation} 

The attenuation and dispersion can now be determined by substituting \eqref{eq:hStrick} in 
the equations
\begin{gather}
\mathcal{A}(\omega) = \omega^2 \int_{]0,\infty[\;} 
\frac{h(r)}{\omega^2 + r^2} \dd r\label{eq:frAtt}\\
\mathcal{D}(\omega) = \omega \int_{]0,\infty[\;} 
\frac{r\,h(r)}{\omega^2 + r^2} \label{eq:frDisp}
\end{gather}
\citep{HanWM2013}. The attenuation-dispersion 
spectrum (the support of the function $h(r)$) of the materials with the Strick-Mainardi creep 
compliance is bounded. This implies that
the attenuation function tends to a finite value at infinite frequency. In particular, 
if $J_1 = 0$ then
$$h(r) = \frac{\sqrt{\rho \, \vert M_1\vert}}{ \upi} \vert \sin(\alpha\, \upi/2)\vert \, 
\left(\Omega/r - 1\right)^{\alpha/2}, \qquad 0 \leq r \leq \Omega$$
and
$$\int_0^\Omega (\Omega/r - 1)^{\alpha/2} \, \dd r = 
\Omega \int_0^\infty (1 + y)^{-2}\, y^{\alpha/2} \,\dd y
= \frac{\Omega \alpha \upi/2}{\sin(\alpha \upi/2)}$$
hence 
\begin{equation}
\lim_{\omega\rightarrow \infty} \mathcal{A}(\omega) = \int_0^\Omega h(r)\, \dd r = 
 \frac{\vert \alpha\vert \,\Omega\, \sqrt{\rho\, M_1}}{2}
\end{equation}

Attenuation in acoustics is usually expressed in db/m, $A_{\mathrm{dbm}}(\omega) := 
\log_{10}\left(\e^{-\mathcal{A}(\omega)}\right)$,
where $\mathcal{A}(\omega)$ is expressed in m$^{-1}$,  in terms of the quality factor 
$Q(\omega) = \omega/[2 c(\omega)\, \mathcal{A}(\omega)]$ \citep{AkiRichards,Carcione}. 

Figure~\ref{fig:StrickAttCph} shows that the bounded and unbounded Strick-Mainardi  
creep compliances yield very similar dispersion and attenuation.
\begin{figure}
\begin{minipage}{0.48\textwidth}
\includegraphics[width=\textwidth]{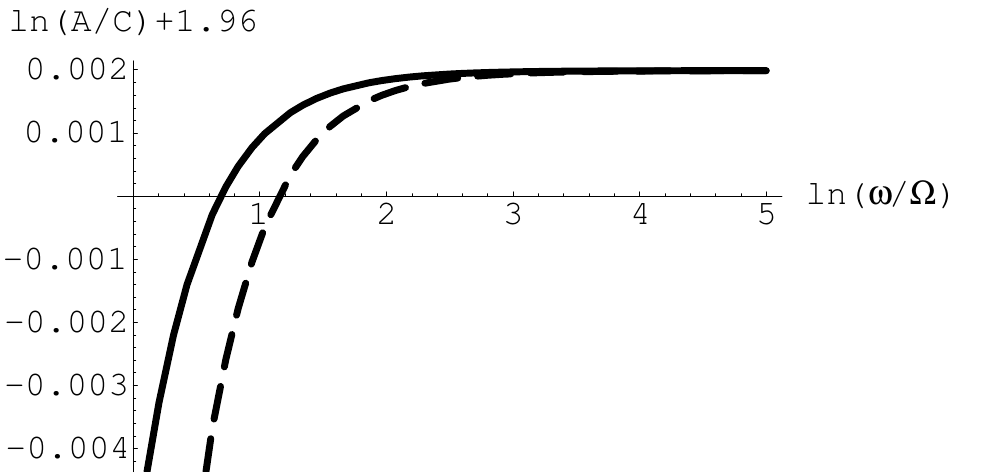}
\begin{center}
{\small (a) Log-log plot of the attenuation functions. }
\end{center}
\end{minipage}
\begin{minipage}{0.48\textwidth}
\includegraphics[width=\textwidth]{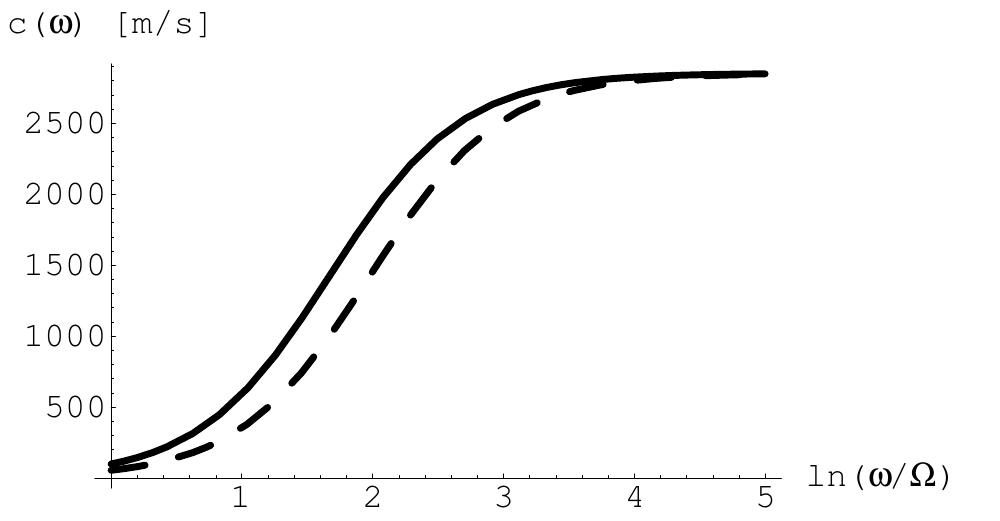}
\begin{center}
{\small (b) Phase speeds.}
\end{center}
\end{minipage}
\caption{Attenuation and phase speed in a medium with the Strick-Mainardi creep compliance 
with $c_0 = 2851 \,\mathrm{m}/\mathrm{s}$,
$J_0 = 4.1\ast 10^{-11}\, \mathrm{Pa}^{-1}$, $M_0 = 0.026$ corresponding to
$Q = 50$. Solid curves: $\alpha=0.3$, dashed curves: $\alpha=-0.3$.} \label{fig:StrickAttCph}
\end{figure}

\section{Wavefronts in Strick-Mainardi models.}

Strick-Mainardi creep models have a bounded creep rate, i.e. $J^{(\alpha,\Omega)\prime}(0) = M_0 \, \Omega$ is finite and 
the jump of Green's function at the wavefront \eqref{eq:g0vsJ} assumes the special form
$\exp(-M_0\, \Omega\, r/(2 c_0\, J_0)$. The values of the Young modulus $1/J_0$ and wavefront velocity $c_0$ or
density $\rho$ are known for many materials and we can only speculate about the creep parameter $M_0$ and the creep 
time scale $2 \upi/\Omega$.
The ratio $M_0/J_0$ controls the rate of gradual creep to instantaneous elastic strain following application of a unit stress.
For $\alpha < 0$ this parameter  controls the 
saturation creep $J_\infty = \lim_{t \rightarrow \infty} J^{(\alpha,\Omega)}(t) = 1/G_\infty$, where $G_\infty$ is the equilibrium elastic modulus.
We recall that $J^{(\alpha,\Omega)}(t)$ tends to infinity as $t \rightarrow \infty$ if $\alpha \geq 0$. 
For a fixed $M_0/J_0$ ratio the logarithmic decay of the wavefront jump is controlled by the wavefront attenuation length scale
$2 \upi \,c_0\,/\Omega$. 

The sign of $\alpha$ determines the long-time asymptotics of the function $g$ and the rate of growth of
Green's function away from the wavefront. Let $0< \alpha < 1$. The asymptotics of 
$h(r)$ for $r \rightarrow 0$
can be easily calculated: $$h(r) \sim_0 \sqrt{\frac{\rho\, M_0\, \Omega^\alpha}{\alpha}} \frac{\sin(\alpha\, \upi/2)}{\upi} 
r^{-\alpha/2}$$
Thus $h(r)$ is regularly varying at 0 and 
\begin{equation}
g(t) \sim_\infty \Gamma(1 - \alpha/2)\sqrt{\frac{\rho\, M_0\, \Omega^\alpha}{\alpha}} 
\frac{\sin(\alpha\, \upi/2)}{\upi} t^{\alpha/2 - 1}
\end{equation}
by the Karamata Abelian Theorem. 
Thus $g(t)$ decreases slower than $t^{-1}$ in this case.

If $-1 < \alpha < 0$, then $(\Omega/r - 1)^\alpha \rightarrow 0$ as $r \rightarrow 0$ and therefore 
$ \lim_{t\,\rightarrow \infty} [t \, g(t)] =  \lim_{r \rightarrow 0} h(r) = 0$. Consequently close to the origin 
the function $g(t)$ decreases faster than $t^{-1}$. 

The case of $\alpha = 0$ has to be considered separately. Equation~\eqref{eq:log} shows that  $h$ is slowly varying at 0. 
Denote the right-hand side of \eqref{eq:log} by $l(\Omega/r)$. It is a function of dimension T/L. We then have
\begin{equation}
g(t) \sim_0 l(\Omega\, t)/t = C /\left[t\, \ln^{1/2}(\Omega \, t)\right] 
\end{equation}
where $C$ is a constant of dimension T/L. Note that $g$ decreases faster than $1/t$. 

The function $g$ can be calculated in closed form in the case of $J_1 = 0$:
\begin{equation}
g(t) = \frac{\alpha\, \sqrt{\rho\, \vert M_1\vert}\, \Omega}{2} \sin(\alpha \upi/2)\, _1F_1(1-\alpha/2,2;-\Omega\, t)
\qquad t \geq 0
\end{equation}

Asymptotic estimates of Green's function in a neighborhood of the
wavefront for $\Omega = M_0 = 1$ are plotted in Figure~\ref{fig:Strickwvf}. Exaggerated values of
material parameters have been chosen for illustrative purposes. For metals Young's modulus
is of order of a few hundreds of GPa. In this case we should assume 
$J_0 \sim 10^{-11} \mathrm{Pa}^{-1}$ and the function $g$ is of order
of $10^{-6}\, \mathrm{m}^{-1}$. With these parameter values the wavefront is hardly 
distinguishable from the simple step function.
Bio-tissues such as liver have however much lower Young's modulus of order of hundreds Pa. 
In this case Green's function exhibits significant variation behind the wavefront. 

\begin{figure}
\begin{minipage}[b]{0.48\linewidth}
\includegraphics[width=\linewidth]{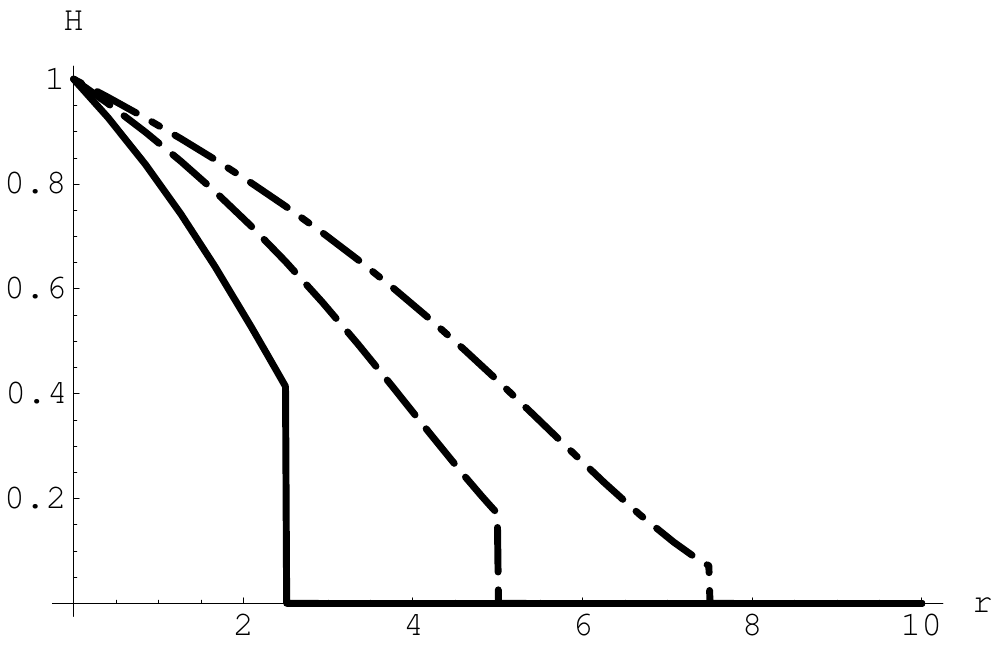}
\begin{center} {\small (a) Evolution of the wavefront profile for $\alpha=0.5$, 
$\Omega = M_0 = 1$, $c_0 = 1 \mathrm{km}/\mathrm{s}$}.
\end{center}
\end{minipage}
\begin{minipage}[b]{0.48\linewidth}
\includegraphics[width=\linewidth]{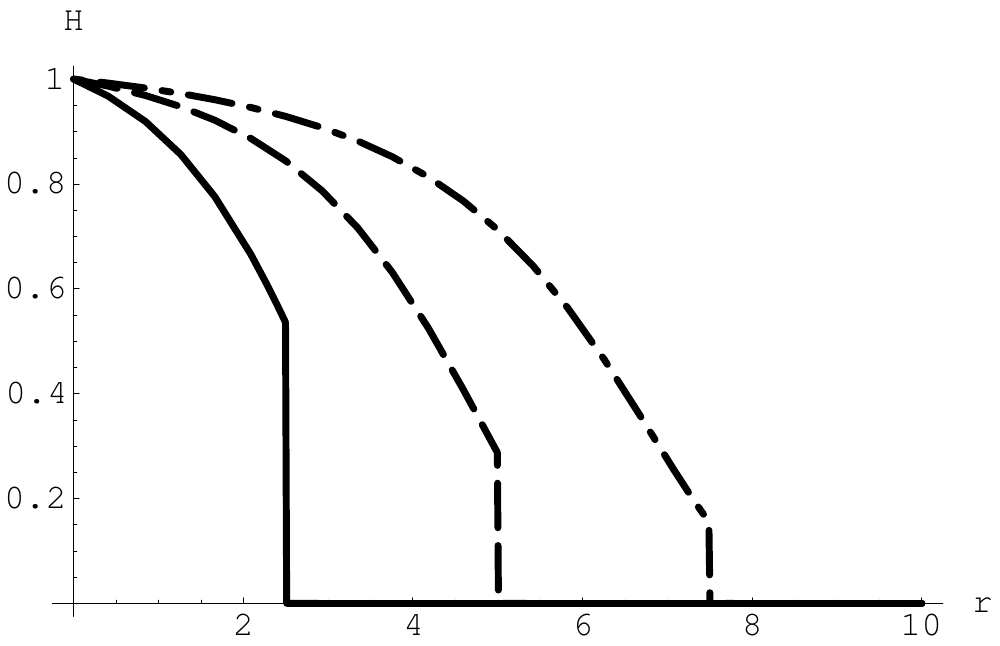}
\begin{center} {\small (b) Evolution of the wavefront profile for $\alpha=-0.5$, 
$\Omega = M_0 = 1$, $c_0 = 1 \mathrm{km}/\mathrm{s}$}.
\end{center}
\end{minipage}
\begin{center}
\begin{minipage}[b]{0.70\linewidth}
\includegraphics[width=\linewidth]{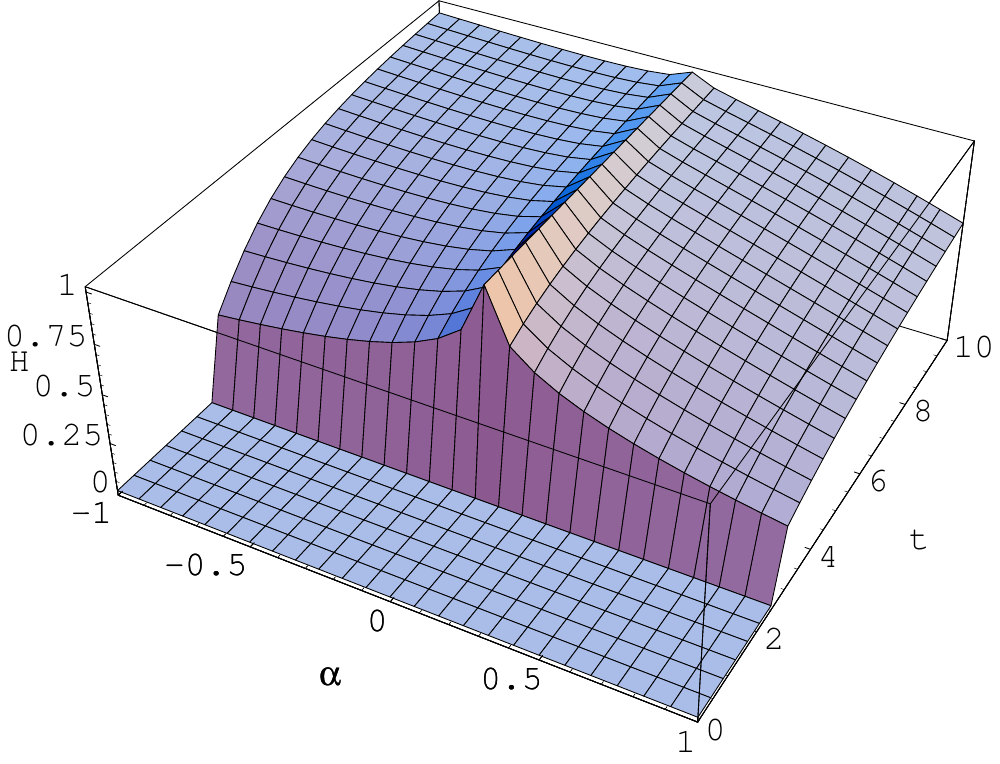}
\begin{center} {\small (c) Dependence of the wavefront signal on $\alpha$ at $r = 5\, \mathrm{km}$.}
\end{center}
\end{minipage}
\end{center}
\caption{Green's function of Strick's creep compliance model near the wavefront.}
\label{fig:Strickwvf}
\end{figure} 

\section{Jeffreys-Lomnitz creep compliance, attenuation and wavefronts.}

The Lomnitz logarithmic law was suggested in the context of the constant $Q$ hypothesis. 
The Jeffreys-Lomnitz-Strick creep compliance is defined by the equation
\begin{equation}
J_{\alpha,\Omega}(t) = J_0 + \begin{cases} 
J_0 + M_0 \frac{(1 + \Omega\, t)^\alpha - 1}{\alpha}, & \alpha \neq 0\\
J_0 + M_0 \, \ln(1 + \Omega\, t), & \alpha = 0
\end{cases}
\end{equation}
for $\alpha \leq 1$, $J_0, M_0, \Omega \geq 0$.
The logarithmic law ($\alpha = 0$) is due to \citet{Lomnitz57,Lomnitz62}, the extension to $\alpha > 0$ was
made by \citet{Jeffreys67} and the extension to negative values of $\alpha$ is due to 
\citet{StrickMainardi82}.  Strick and Mainardi also compared the Jeffreys-Lomnitz-Strick law with Becker's creep compliance, focusing however on the
values of $Q$ predicted by these theories. More recently, the Jeffreys-Lomnitz-Strick law and the associated material response functions were examined by \citet{MainardiSpada2012a}.

The retardation spectral density $H_{\alpha,\Omega}(r)$ of the Jeffreys-Lomnitz-Strick media can be calculated
using the identity \citep{MainardiSpada2012} 
$$\frac{1}{\Gamma(1-\beta)} \int_0^\infty \e^{-r\, t}\, \e^{-r} \, r^{-\beta}\, \dd r = (1 + t)^{\beta-1}$$
for $\beta < 1$. This identity is easily proved by substituting $s = (1 + t)\, r$. 
It follows that
\begin{equation} \label{eq:JFLret}
J_{\alpha,\Omega}(t) = J_0 + \frac{M_0}{\Omega \, \Gamma(-\alpha)} \int_0^\infty \left( 1 - \e^{-r\, t}\right)\, 
\e^{-r/\Omega}\, (r/\Omega)^{-\alpha}\, \dd r
\end{equation}
for $\alpha \neq 0$. For $\alpha = 0$ we note that equation~\eqref{eq:JFLret} follows from the identity 
$$F(x) := \int_0^\infty \left(1 - \e^{-x\, y}\right) \, \e^{-y}\, y^{-1}\, \dd y = \ln(1 + x)$$
Indeed, 
$F^\prime(x) = \int_0^\infty \e^{-x y - y} \, \dd y = 1/(1 + x)$.
Hence the Jeffreys-Lomnitz-Strick retardation spectral density is given by the formula 
\begin{equation}
H_{\alpha,\Omega}(r) = \frac{M_0}{\Omega\, \Gamma(-\alpha)} \e^{-r/\Omega} \, (r/\Omega)^{-\alpha} 
\end{equation}
Note that $\int_0^\infty H_{\alpha,\Omega}(r)\, \dd r < \infty$. 

The attenuation spectral density is more difficult to calculate. 
The Laplace transform of the Jeffreys-Lomnitz-Strick creep compliance can be expressed 
in terms of the exponential integral \citep[Chap.~5]{Abramowitz}
$$\mathrm{E}_\alpha(q) := \int_1^\infty \e^{-q\, r} \, r^{-\alpha} \, \dd r$$
(do not confuse this notation with the Mittag-Leffler function)
by the formula
\begin{equation}
\widetilde{J_{\alpha,\Omega}}(p) = \left\{ J_0 + M_0\, \left(p\, \e^{p/\Omega}\, \mathrm{E}_{-\alpha}(p/\Omega)/\Omega - 1\right)/\alpha \right\}/p, \qquad \alpha \neq 0
\end{equation}
For $\alpha = 0$ note that
$$\int_0^\infty \e^{-p t}\, \ln(1 + \Omega\, t) \, \dd t = \frac{\e^{p/\Omega}}{\Omega}
\int_1^\infty \e^{-p y/\Omega} \,\ln(y)\, \dd y = \frac{\e^{p/\Omega}}{p} \int_1^\infty \e^{-p y/\Omega}/y\, \dd y$$
Hence
\begin{equation}
\widetilde{J_{\alpha,\Omega}}(p) = \left[J_0 + M_0 \, \e^{p/\Omega}\, \mathrm{E}_1(p/\Omega)\right]/p
\end{equation}
The exponential integral  
has a branching cut along the entire negative axis. We thus do not expect the attenuation 
spectrum to be bounded, but we shall show that the attenuation measure has finite total mass
and therefore the attenuation function is bounded. This implies 
that $g(0+) < \infty$ and shock wave discontinuities propagate at the wavefronts.
Consequently Jeffreys-Lomnitz-Strick media support shock waves. 

The asymptotic formula 
$$\mathrm{E}_{-\alpha}(z) \sim_\infty \left(\e^{-z}/z\right) \, \left[ 1 + \alpha/z + \OO\left[z^{-2}\right]
\right]$$
(\citet{Abramowitz}, Sec.~5.1.51) implies that
$$\lim_{p\rightarrow \infty}  \left\{ p\, \left[\left(\rho\, p \,
\widetilde{J_{\alpha,\Omega}}(p)\right)^{1/2} - (\rho\, J_0)^{1/2}\right]\right\} = N$$
where 
$ N = M_0/[2 c_0\, J_0] < \infty$ 
for both $\alpha \neq 0$ and $\alpha = 0$. 
Hence, by Theorem~\ref{thm:xx} 
 the attenuation measure $\nu$ has finite total mass. It follows that 
$\lim_{\omega\rightarrow\infty}
\mathcal{A}(\omega) = \int_{]0,\infty[} \nu(\dd r) = N$, hence the attenuation 
function is bounded.

In the case at hand the attenuation function and the phase speed can be calculated using 
the equations
$\mathcal{A}(\omega) = \omega\, \im\left[-\ii \omega \, 
\widetilde{J_{\alpha,\Omega}}(-\ii \omega)\right]^{1/2}$,
$\mathcal{D}(\omega) = \omega\, \re\left[-\ii \omega \, \widetilde{J_{\alpha,\Omega}}(-\ii \omega)\right]^{1/2}$ 
and $1/c(\omega) = 1/c_0 + \mathcal{D}(\omega)/\omega$, where $c_0 = (\rho\, J_0)^{-1/2}$.  
The results for selected parameters are shown in Figure~\ref{fig:JLS}.
\begin{figure}
\begin{minipage}{0.48 \textwidth}
\begin{center}
\includegraphics[width=\textwidth]{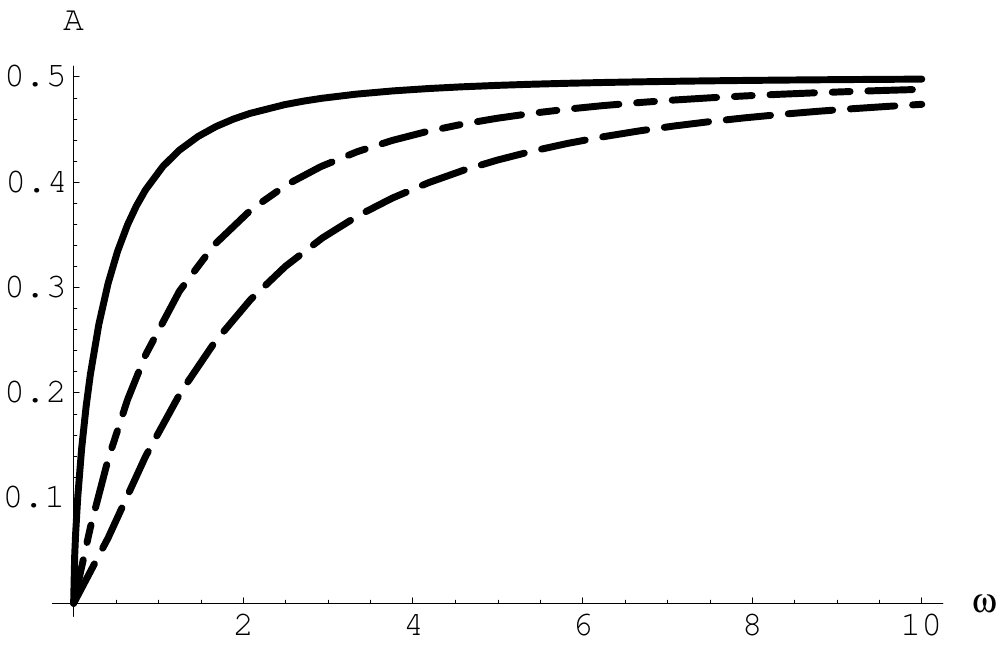}
 {\small (a) Attenuation function.} 
\end{center}
\end{minipage}
\begin{minipage}{0.48\textwidth}
\begin{center}
\includegraphics[width=\textwidth]{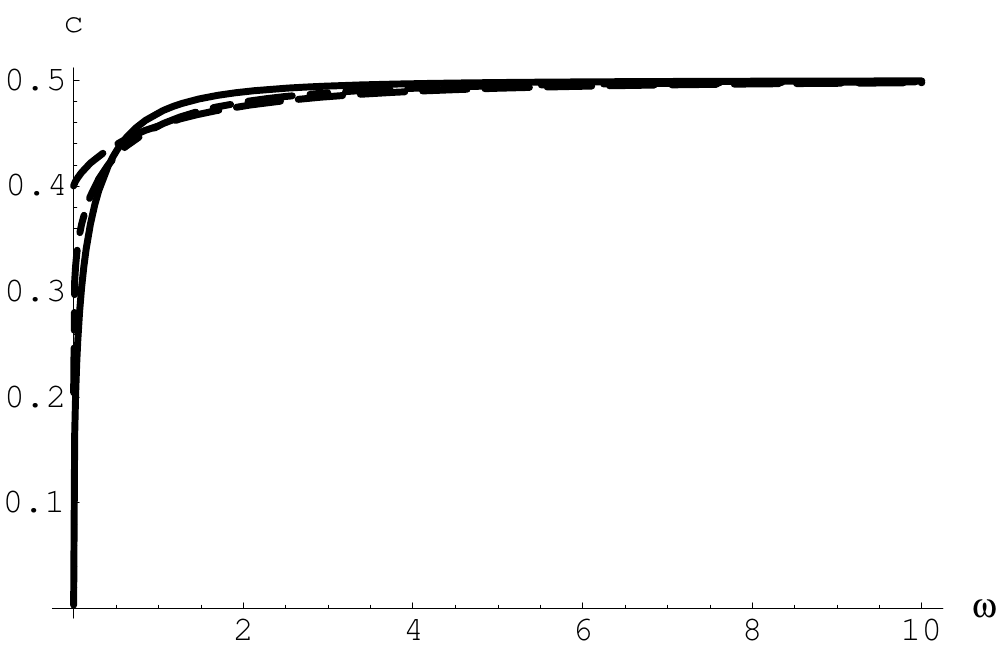}
 {\small (b) Phase speed.} 
\end{center}
\end{minipage}
\caption{The attenuation and phase speed for the Jeffreys-Lomnitz-Strick media, for $J_0 = M_0 = \Omega = 1$,
solid line: $\alpha = 0.8$; dashed line: $\alpha = 0$; dot-dashed line: $\alpha = -0.8$.}
\label{fig:JLS}
\end{figure}

\section{A viscoelastic medium exhibiting the pedestal effect.}

The Jeffreys creep compliance with $0 < \alpha < 1$ can be recast in the form
\begin{equation}
J_{\alpha,\tau}(t) = J_0 + J_2\, (\tau + t)^\alpha  
\end{equation}
where $\tau = 1/\Omega$ and $J_2 = M_0\, \Omega^\alpha/\alpha$. $J_{\alpha,0}$ is a special case 
of the Andrade creep compliance 
\begin{equation}
J_\alpha(t) = J_0 + J_1\, t + J_2\, t^\alpha 
\end{equation}
where $J_0, J_1, J_2 \geq 0$ and $0 < \alpha \leq 1$. The second term is known as linear creep; 
 it dominates at long observation times. Pure linear creep is equivalent to Newtonian viscosity.
The third term dominates for shorter observation times and is attributed to dislocation motion.
Andrade creep was originally observed in metals 
\citep{Andrade1,Andrade2,Cottrell96,Nabarro97,MiguelAl2002} with $\alpha = 1/3$, 
but it was subsequently found in other 
materials, including rocks \citep{Lockner93,MurrellChakravarthy73,GribbAl98}. 
$J_\alpha$ is clearly a Bernstein function. 

Even though the Andrade creep compliance has been obtained by 
a limiting process from the Jeffreys creep compliance it is radically different 
from the latter because $J_\alpha^\prime(t)$ tends to infinity for $t \rightarrow 0+$.
It is shown in \citep{HanUno} that viscoelastic media with this property do not support
discontinuity waves. 

Note that 
$p\, \widetilde{J_\alpha}(p) = J_0 + J_1/p + J_2\, p^{-\alpha}$ and
$\kappa(p) = p\, \left[ 1 + J_1/(J_0\, p) + J_2/J_0 \, p^{-\alpha}\right]^{1/2}/c_0$,
where $c_0$ is given by equation~\eqref{eq:c0}. Hence
$$\lim_{p\rightarrow\infty}\, \left[ \kappa(p) - p/c_0\right] = \frac{1}{2 c_0} 
\lim_{p\rightarrow\infty}\, \left[\frac{J_1}{J_0} + \frac{J_2}{J_0} p^{1-\alpha}\right] = 
\infty$$
Hence the attenuation measure $\nu$ has infinite total mass and the attenuation
spectrum is the entire positive semi-axis. 

The attenuation function can be explicitly calculated 
\begin{multline}
\mathcal{A}(\omega) = \re \kappa(-\ii \omega)  = 
\frac{\omega}{c_0} \im \left[ 1 + \frac{J_1}{-\ii \omega\, J_0} +
\frac{J_2}{J_0} (-\ii \omega)^{-\alpha}\right] = \\
\frac{\omega}{c_0\, \sqrt{2}}\Big\{
\sqrt{1 + J_1^{\;2}/(J_0^{\;2}\, \omega^2) + J_2^{\;2}/J_0^{\;2}\, \omega^{-2 \alpha}
+ J_2/J_0 \, \left[\sin{\upi \alpha/2) + \cos(\upi\, \alpha/2)}\right]\, \omega^{-\alpha}}
\\ -1 - J_2/J_0 \, \cos(\upi \alpha/2)\, \omega^{-\alpha}
\Big\}
\end{multline}
In the high-frequency range
\begin{equation}
\mathcal{A}(\omega) \sim_\infty \sin^{1/2}(\upi\, \alpha/2)\, \left(\frac{J_2}{J_0}\right)^{1/2}
\frac{\omega^\gamma}{\sqrt{2}}  
\end{equation}
where $\gamma := 1 - \alpha/2$ satisfies the inequalities $1/2 < \gamma < 1$. 
It follows from the theory developed by \cite{HanJCA} that Green's functions for the 
Andrade viscoelastic media
are infinitely smooth at the wavefronts. This in turn implies that acoustic pulses 
follow the wavefront with a delay and are preceded by a pedestal in Strick's terminology
\citep{Strick1:ConstQ}. The effective travel time of a seismic signal is thus greater
than the wavefront travel time, which is directly linked to the wavefront speed. In seismic inversion the effective travel time is determined \citep{FastTrack}. If it is believed that 
the attenuation function increases at a rate higher than logarithmic,  as is the case in
Andrade viscoelastic media, then the standard methods of seismic inversion misposition
the scatterers. This error was pointed out by \citep{FastTrack} and in a different context by \citet{Strick3:ConstQ}.

The Andrade model was conceived as
a fit to creep data rather than wave propagation. It is nevertheless an instructive example
of the possibility of an entirely different wave propagation pattern, with important 
consequences for the identification of travel times and location of scatterers in 
seismic applications.

Attenuation and dispersion in Andrade viscoelastic media was recently studied by 
semi-numerical methods by \cite{BenJaziaLombardBellis2013}. The discrete approximation of
the Andrade creep compliance applied in this paper does not reflect the unboundeness 
of the attenuation spectrum and of the attenuation function. Consequently it does not
account for the wavefront smoothness.

\section{Concluding remarks.}

Strick-Mainardi and Jeffreys-Lomnitz-Strick viscoelastic models provide the only known examples of a closed form creep 
compliance consistent
with the propagation of shock waves. The former models are characterized by bounded attenuation and retardation spectra while
the latter have integrable attenuation and retardation spectral densities. The Strick-Mainardi retardation and 
attenuation-dispersion spectral measures are given by elementary functions and the function $g$ is easy to analyze.
The Jeffreys-Lomnitz-Strick models are not amenable to such a detailed analysis but 
numerical analysis
shows that they they are qualitatively fairly similar. The similarity is due to the fact 
that both classes of models defined in terms of 
the power function, which is invariant with respect to the Carson-Laplace transform up
to a numerical factor.
In the context of the $Q$ factor such 
striking similarities were discovered in \cite{StrickMainardi82}.

Short-time creep and its singularity at 0 affects Green's functions at the wavefront.
The exact time dependence of the wave field at the wavefront is however represented by 
the function $g(t)$, which is indirectly related to the creep rate function. 
The wavefront behavior of Green's functions provides a constraint 
on the creep rate at short times.  

\section{Acknowledgment.}

The Author is indebted to Francesco Mainardi for precious bibliographic information.


\begin{thebibliography}{48}
\providecommand{\natexlab}[1]{#1}
\expandafter\ifx\csname urlstyle\endcsname\relax
  \providecommand{\doi}[1]{doi:\discretionary{}{}{}#1}\else
  \providecommand{\doi}{doi:\discretionary{}{}{}\begingroup
  \urlstyle{rm}\Url}\fi

\bibitem[{\textsc{Abramowitz} and \textsc{Stegun}(1970)}]{Abramowitz}
\textsc{Abramowitz, M.} and \textsc{Stegun, I.} (1970).
\newblock Mathematical Tables (Dover, New York).

\bibitem[{\textsc{Aki} and \textsc{Richards}(2002)}]{AkiRichards}
\textsc{Aki, K.} and \textsc{Richards, P.~G.} (2002).
\newblock Quantitative Seismology (University Science Books, Sausalito).
\newblock 2nd edition.

\bibitem[{\textsc{Becker}(1925)}]{Becker25}
\textsc{Becker, R.} (1925).
\newblock Elastische {N}achwirkung und {P}lastizit\"{a}t.
\newblock Z. Phys. 33, 185–--213.

\bibitem[{\textsc{Ben Jazia} \emph{et~al.}(2013)}]{BenJaziaLombardBellis2013}
\textsc{Ben Jazia, A.}, \emph{et~al.} (2013).
\newblock Wave propagation in a fractional viscoelastic {A}ndrade medium:
  diffusive approximation and numerical modeling.
\newblock arXiv:1312.4820 .

\bibitem[{\textsc{Biot}(1956{\natexlab{\emph{a}}})}]{BiotMech:BIOT}
\textsc{Biot, M.~A.} (1956{\natexlab{\emph{a}}}).
\newblock Mechanics of deformation of a porous viscoelastic anisotropic solid.
\newblock J. Appl. Phys. 27, 459--467.

\bibitem[{\textsc{Biot}(1956{\natexlab{\emph{b}}})}]{BiotI}
\textsc{Biot, M.~A.} (1956{\natexlab{\emph{b}}}).
\newblock Theory of propagation of elastic waves in a fluid-saturated porous
  solid. {I}-- {L}ow frequency range.
\newblock J. Acoust. Soc. Am. 28, 168--178.

\bibitem[{\textsc{Biot}(1956{\natexlab{\emph{c}}})}]{BiotII}
\textsc{Biot, M.~A.} (1956{\natexlab{\emph{c}}}).
\newblock Theory of propagation of elastic waves in a fluid-saturated porous
  solid. {II}-- {H}igher frequency range.
\newblock J. Acoust. Soc. Am. 28, 179--191.

\bibitem[{\textsc{Biot}(1962)}]{BiotMechP:BIOT}
\textsc{Biot, M.~A.} (1962).
\newblock Mechanics of deformation and acoustic propagation in porous media.
\newblock J. Appl. Phys. 33, 1482--1498.

\bibitem[{\textsc{Buchen}(1974)}]{Buchen1:VE}
\textsc{Buchen, P.~W.} (1974).
\newblock Application of ray series method to linear viscoelastic wave
  propagation.
\newblock PAGEOPH 112, 1011--1030.

\bibitem[{\textsc{Carcione}(2001)}]{Carcione}
\textsc{Carcione, J.~M.} (2001).
\newblock Waves in Real Media: Wave Propagation in Anisotropic, Anelastic and
  Porous Media (Pergamon Press, Amsterdam).

\bibitem[{\textsc{Cottrell}(1996)}]{Cottrell96}
\textsc{Cottrell, A.~H.} (1996).
\newblock Andrade creep.
\newblock Phil. Mag. Lett. 73, 35--37.

\bibitem[{\textsc{da~Andrade}(1910)}]{Andrade1}
\textsc{da~Andrade, E.~N.} (1910).
\newblock On the viscous flow of metals and allied phenomena.
\newblock Proc. Roy. Soc. London A84, 1--12.

\bibitem[{\textsc{da~Andrade}(1912)}]{Andrade2}
\textsc{da~Andrade, E.~N.} (1912).
\newblock On the validity of the $t^{1/3}$ law of flow of metals.
\newblock Phil. Mag. 7, (84).

\bibitem[{\textsc{Gribb} and \textsc{Cooper}(1998)}]{GribbAl98}
\textsc{Gribb, T.~T.} and \textsc{Cooper, R.~F.} (1998).
\newblock Low-frequency shear attenuation in polycrystalline olivine: Grain
  boundary diffusion and the physical significance of the {A}ndrade model for
  viscoelastic rheology.
\newblock J. Geophys. Res. Solid Earth 103 (B11), 27267–-27279.

\bibitem[{\textsc{Gripenberg} \emph{et~al.}(1990)}]{GripenbergLondenStaffans}
\textsc{Gripenberg, G.}, \emph{et~al.} (1990).
\newblock Volterra Integral and Functional Equations (Cambridge University
  Press, Cambridge).

\bibitem[{\textsc{Hanyga}(2013)}]{HanWM2013}
\textsc{Hanyga, A.} (2013).
\newblock Wave propagation in linear viscoelastic media with completely
  monotonic relaxation moduli.
\newblock Wave Motion 50, 909--928.
\newblock \doi{10.1016/j.wavemoti.2013.03.002}.

\bibitem[{\textsc{Hanyga}(2014{\natexlab{\emph{a}}})}]{HanUno}
\textsc{Hanyga, A.} (2014{\natexlab{\emph{a}}}).
\newblock Asymptotic estimates of viscoelastic {G}reen's functions near the
  wavefront.
\newblock Arxiv:1401.1046 [math-phys]; accepted for publication in {\em Quart.
  appl. Math.}

\bibitem[{\textsc{Hanyga}(2014{\natexlab{\emph{b}}})}]{HanJCA}
\textsc{Hanyga, A.} (2014{\natexlab{\emph{b}}}).
\newblock Dispersion and attenuation for an acoustic wave equation consistent
  with viscoelasticity.
\newblock Accepted for publication in {\em J. Comput. Acoustics}.

\bibitem[{\textsc{Hanyga} and \textsc{Carcione}(2000)}]{HanCarc}
\textsc{Hanyga, A.} and \textsc{Carcione, J.~M.} (2000).
\newblock Numerical solutions of a poro-acoustic wave equation with generalized
  fractional integral operators.
\newblock In \textsc{Berm\'{u}dez, A.}, \emph{et~al.} (eds.), Fifth
  {I}nternational {C}onference on {M}athematical and {N}umerical {A}spects of
  {W}ave {P}ropagation (SIAM-INRIA, Philadelphia), pp. 163--168.
\newblock Proc. of the {SIAM/INRIA} conference {WAVES2000} in Santiago de
  Compostela, July 10--14, 2000.

\bibitem[{\textsc{Hanyga} and \textsc{Lu}(2005)}]{LuHanygaPorous3}
\textsc{Hanyga, A.} and \textsc{Lu, J.-F.} (2005).
\newblock Wave field simulation for heterogeneous transversely isotropic porous
  media with the {JKD} dynamic permeability.
\newblock Comput. Mech. 36, 196--208.
\newblock \doi{10.1007/s00466-004-0652-3}.

\bibitem[{\textsc{Hanyga} and
  \textsc{Seredy\'{n}ska}(1999{\natexlab{\emph{a}}})}]{HanSerWM}
\textsc{Hanyga, A.} and \textsc{Seredy\'{n}ska, M.}
  (1999{\natexlab{\emph{a}}}).
\newblock Asymptotic ray theory in poro- and viscoelastic media.
\newblock Wave Motion 30, 175--195.

\bibitem[{\textsc{Hanyga} and
  \textsc{Seredy\'{n}ska}(1999{\natexlab{\emph{b}}})}]{FastTrack}
\textsc{Hanyga, A.} and \textsc{Seredy\'{n}ska, M.}
  (1999{\natexlab{\emph{b}}}).
\newblock Some effects of the memory kernel singularity on wave propagation and
  inversion in poroelastic media, {I}: Forward modeling.
\newblock Geophys. J. Int. 137, 319--335.

\bibitem[{\textsc{Hanyga} and \textsc{Seredy\'{n}ska}(2002)}]{HanQAM}
\textsc{Hanyga, A.} and \textsc{Seredy\'{n}ska, M.} (2002).
\newblock Asymptotic wavefront expansions in hereditary media with singular
  memory kernels.
\newblock Quart. appl. Math. LX, 213--244.

\bibitem[{\textsc{Hanyga} and \textsc{Seredy\'{n}ska}(2007)}]{HanDuality}
\textsc{Hanyga, A.} and \textsc{Seredy\'{n}ska, M.} (2007).
\newblock Relations between relaxation modulus and creep compliance in
  anisotropic linear viscoelasticity.
\newblock J. of Elasticity 88, 41--61.

\bibitem[{\textsc{Hanyga} and \textsc{Seredy\'{n}ska}(2012)}]{SerHanJMP}
\textsc{Hanyga, A.} and \textsc{Seredy\'{n}ska, M.} (2012).
\newblock Spatially fractional-order viscoelasticity, non-locality, and a new
  kind of anisotropy.
\newblock Journal of Mathematical Physics 53, 052902.
\newblock \doi{10.1063/1.4712300}.

\bibitem[{\textsc{Jacob}(2001)}]{Jacob01I}
\textsc{Jacob, N.} (2001).
\newblock Pseudo-{D}ifferential Operators and {M}arkov Processes, vol.~I
  (Imperial College Press, London).

\bibitem[{\textsc{Jeffreys}(1967)}]{Jeffreys67}
\textsc{Jeffreys, H.} (1967).
\newblock A modification of {L}omnitz' law of creep in rocks.
\newblock Roy. Astron. Soc. Geophys. J. 14, 1--4.

\bibitem[{\textsc{Kelly} \emph{et~al.}(2008)}]{KellyMcGoughMeerschaert08}
\textsc{Kelly, J.~F.}, \emph{et~al.} (2008).
\newblock Analytical time-domain {G}reen's functions for power-law media.
\newblock J. Acoust. Soc. Am. 124, 2861--2872.

\bibitem[{\textsc{Kjartansson}(1979)}]{Kjartansson:ConstQ}
\textsc{Kjartansson, E.} (1979).
\newblock Constant {Q}-wave propagation and attenuation.
\newblock J. Geophys. Res. 84, 4737--4748.

\bibitem[{\textsc{Lockner}(1993)}]{Lockner93}
\textsc{Lockner, D.} (1993).
\newblock Room temperature creep in saturated granite.
\newblock J. Geophys. Res. 98, 475–--487.

\bibitem[{\textsc{Lomnitz}(1957)}]{Lomnitz57}
\textsc{Lomnitz, C.} (1957).
\newblock Linear dissipation in solids.
\newblock J. Appl. Phys. 28, 201--205.

\bibitem[{\textsc{Lomnitz}(1962)}]{Lomnitz62}
\textsc{Lomnitz, C.} (1962).
\newblock Application of the logarithmic creep law to stress-wave attenuation
  in the solid {E}arth.
\newblock J. Geophys. Res. 67, 365--368.

\bibitem[{\textsc{Lu} and
  \textsc{Hanyga}(2005{\natexlab{\emph{a}}})}]{LuHanygaJCP}
\textsc{Lu, J.-F.} and \textsc{Hanyga, A.} (2005{\natexlab{\emph{a}}}).
\newblock Wave field simulation for heterogeneous porous media with a singular
  memory drag force.
\newblock J. Comp. Phys. 208, 651--674.
\newblock \doi{10.1016/j.jcp.2005.03.008}.

\bibitem[{\textsc{Lu} and
  \textsc{Hanyga}(2005{\natexlab{\emph{b}}})}]{LuHanygaGP}
\textsc{Lu, J.-F.} and \textsc{Hanyga, A.} (2005{\natexlab{\emph{b}}}).
\newblock Wave field simulation for heterogeneous transversely isotropic porous
  media with the {JKD} dynamic permeability.
\newblock Comp. Mech. 36, 196--208.
\newblock \doi{10.1007/s00466-004-0653-3}.

\bibitem[{\textsc{Mainardi} and \textsc{Spada}(2012)}]{MainardiSpada2012}
\textsc{Mainardi, F.} and \textsc{Spada, G.} (2012).
\newblock Becker and {L}omnitz rheological models: A comparison.
\newblock In \textsc{D'Amore, A.}, \emph{et~al.} (eds.), AIP Conference
  Proceedings, vol. 1459 (American Institute of Physics), pp. 132--135.
\newblock Proceedings of the International Conference on {T}imes of {P}olymers
  \& {C}omposites), Ischia, Italy, 10-14 June 2012.

\bibitem[{\textsc{Mainardi} and
  \textsc{Spada}(2012{\natexlab{\emph{b}}})}]{MainardiSpada2012a}
\textsc{Mainardi, F.} and \textsc{Spada, G.} (2012{\natexlab{\emph{b}}}).
\newblock On the viscoelastic characterization of the Jeffreys-Lomnitz law of
  creep.
\newblock Rheol. Acta 51, 783--791.

\bibitem[{\textsc{Miguel} \emph{et~al.}(2002)}]{MiguelAl2002}
\textsc{Miguel, M.-C.}, \emph{et~al.} (2002).
\newblock Dislocation jamming and {A}ndrade creep.
\newblock Phys. Rev. Lett. 89, 165501.

\bibitem[{\textsc{Murrell} and
  \textsc{Chakravarty}(1973)}]{MurrellChakravarthy73}
\textsc{Murrell, S.~A.~F.} and \textsc{Chakravarty, S.} (1973).
\newblock Some new rheological experiments on igneous rocks at temperatures up
  to 1120$^{\mathrm{o}}$\,c.
\newblock Geophys. J. R. astr. Soc. 34, 211--250.

\bibitem[{\textsc{Nabarro}(1997)}]{Nabarro97}
\textsc{Nabarro, F.~R.~N.} (1997).
\newblock Thermal activation and {A}ndrade creep.
\newblock Phil. Mag. Lett. 75, 227--233.

\bibitem[{\textsc{N\"{a}sholm} and \textsc{Holm}(2011)}]{NasholmHolm2011}
\textsc{N\"{a}sholm, S.~P.} and \textsc{Holm, S.} (2011).
\newblock Linking multiple relaxation, power-law attenuation and and fractional
  wave equations.
\newblock J. Acoust. Soc. Am. 130, 3038--3045.

\bibitem[{\textsc{Rudin}(1976)}]{Rudin76}
\textsc{Rudin, W.} (1976).
\newblock Principles of Mathematical Analysis (McGraw-Hill, New York).
\newblock 3rd edition.

\bibitem[{\textsc{Schilling} \emph{et~al.}(2010)}]{BernsteinFunctions}
\textsc{Schilling, R.~L.}, \emph{et~al.} (2010).
\newblock Bernstein Functions. {T}heory and Applications (De Gruyter, Berlin).

\bibitem[{\textsc{Strick}(1970)}]{Strick1:ConstQ}
\textsc{Strick, E.} (1970).
\newblock A predicted pedestal effect for a pulse propagating in constant {Q}
  solids.
\newblock Geophysics 35, 387--403.

\bibitem[{\textsc{Strick}(1971)}]{Strick3:ConstQ}
\textsc{Strick, E.} (1971).
\newblock An explanation of observed time discrepancies between continuous and
  conventional well velocity surveys.
\newblock Geophysics 36, 285--295.

\bibitem[{\textsc{Strick}(1982)}]{Strick82}
\textsc{Strick, E.} (1982).
\newblock Application of linear viscoelasticity to seismic wave propagation.
\newblock In \textsc{Mainardi, F.} (ed.), Wave Propagation in Viscoelastic
  Media (Pitman, London), pp. 169--193.

\bibitem[{\textsc{Strick}(1984)}]{Strick84}
\textsc{Strick, E.} (1984).
\newblock Implications of {J}effreys-{L}omnitz transient creep.
\newblock J. Geophys. Res. 89, 437--451.

\bibitem[{\textsc{Strick} and \textsc{Mainardi}(1982)}]{StrickMainardi82}
\textsc{Strick, E.} and \textsc{Mainardi, F.} (1982).
\newblock On a general class of constant {$Q$} solids.
\newblock Geophys. J. Roy. astr. Soc. 69, 415--429.

\bibitem[{\textsc{Szabo}(2004)}]{Szabo}
\textsc{Szabo, T.~L.} (2004).
\newblock Diagnostic Ultrasound Imaging: Inside Out (Elsevier - Academic Press,
  Amsterdam).

\bibitem[{\textsc{Szabo} and \textsc{Wu}(2000)}]{SzaboWu00}
\textsc{Szabo, T.~L.} and \textsc{Wu, J.} (2000).
\newblock A model for longitudinal and shear wave propagation in viscoelastic
  media.
\newblock J. Acoust. Soc. Am. 107, 2437--2446.

\end{thebibliography}
\end{document}